\documentclass[11pt]{article}
\usepackage[utf8]{inputenc}
\usepackage[T1]{fontenc}
\usepackage{authblk}
\usepackage{amsmath,amssymb,amsfonts,amsthm}
\usepackage[top=1in,bottom=1in,left=0.75in,right=0.75in]{geometry}
\usepackage{xspace,graphicx}

\newtheorem{proposition}{Proposition}

\theoremstyle{definition}
\newtheorem{definition}{Definition}[section]

\theoremstyle{remark}


\newcommand{\ket}[1]{\ensuremath{|#1\rangle}\xspace}
\newcommand{\bra}[1]{\ensuremath{\langle #1|}\xspace}
\newcommand{\Braket}[3]{\ensuremath{\bra{#1}#2\ket{#3}}\xspace}
\newcommand{\braket}[2]{\ensuremath{\langle #1|#2\rangle}\xspace}
\numberwithin{equation}{section}
\title{Generalized squeezed-coherent states of the finite one-dimensional oscillator and matrix multi-orthogonality}
\date{}
\author[1]{Vincent X. Genest\thanks{genestvi@crm.umontreal.ca}}
\author[1]{Luc Vinet\thanks{luc.vinet@umontreal.ca}}
\author[2]{Alexei Zhedanov\thanks{zhedanov@kinetic.ac.donetsk.ua}}
\affil[1]{Centre de recherches math\'ematiques, Universit\'e de Montr\'eal, C.P. 6128, Succursale Centre-ville, Montr\'eal, Qu\'ebec, H3C 3J7, Canada}
\affil[2]{Donetsk Institute for Physics and Technology, Donetsk 83114, Ukraine}
\begin{document}
\maketitle
\thispagestyle{empty}
\hrule
\begin{abstract}
A set of generalized squeezed-coherent states for the finite $\mathfrak{u}(2)$ oscillator is obtained. These states are given as linear combinations of the mode eigenstates with amplitudes determined by matrix elements of exponentials in the $\mathfrak{su}(2)$ generators. These matrix elements are given in the $(N+1)$-dimensional basis of the finite oscillator eigenstates and are seen to involve $3\times 3$ matrix multi-orthogonal polynomials $Q_n(k)$ in a discrete variable $k$ which have the Krawtchouk and vector-orthogonal polynomials as their building blocks. The algebraic setting allows for the characterization of these polynomials and the computation of mean values in the squeezed-coherent states. In the limit where $N$ goes to infinity and the discrete oscillator approaches the standard harmonic oscillator, the polynomials tend to $2\times 2$ matrix orthogonal polynomials and the squeezed-coherent states tend to those of the standard oscillator.
\end{abstract}
\textbf{Keywords:} Finite quantum oscillator, matrix orthogonality, vector polynomials, Krawtchouk polynomials, $d$-orthogonal polynomials, squeezed and coherent states, $\mathfrak{u}(2)$ algebra.
\bigskip

\hrule
\section{Introduction}
Discretizations of the standard quantum harmonic oscillator are provided by finite oscillator models (see for instance \cite{Wolf-2001,Spindel-2008}). We here consider the one based on the Lie algebra $\mathfrak{u}(2)=\mathfrak{u}(1)\oplus \mathfrak{su}(2)$ which has been interpreted as a quantum optical system consisting of $N+1$ equally spaced sensor points \cite{Wolf-2001}. In this connection, we investigate here the matrix elements of exponentials of linear and quadratic expressions in the $\mathfrak{su}(2)$ generators; these operators represent discrete analogues of the squeeze-coherent states operators for the standard quantum oscillator. As shall be seen, these matrix elements are given in terms of matrix multi-orthogonal polynomials.

These polynomials (defined below), generalize the standard orthogonal polynomials by being orthogonal with respect to a \emph{matrix} of functionals \cite{Sorokin-1997}. Very few explicit examples have been encountered in the literature; remarkably, our study entails a family of such polynomials and the algebraic setting allows for their characterization.
\subsection{Finite oscillator and $\mathfrak{u}(2)$ algebra}
The standard one-dimensional quantum oscillator is described by the Heisenberg algebra $\mathfrak{h}_1$, with generators $a$, $a^{\dagger}$ and $\mathfrak{id}$ obeying 
\begin{equation}
[a,a^{\dagger}]=\mathfrak{id}\;\;\text{and}\;\;[a,\mathfrak{id}]=[a^{\dagger},\mathfrak{id}]=0.
\end{equation}
The Hamiltonian is given by $H=a^{\dagger}a+1/2$ and with the position operator $Q$ and the momentum operator $P$ defined as follows:
\begin{align}
Q=\frac{1}{2}(a+a^{\dagger}),\;\;\;\; P=-\frac{i}{2}(a-a^{\dagger}),
\end{align}
the equations of motion
\begin{align}
\label{eq-motion-1}
[H,Q]&=-iP,\\
\label{eq-motion-2}
[H,P]&=iQ,
\end{align}
are recovered.

The finite oscillator model is obtained by replacing the Heisenberg algebra by the algebra $\mathfrak{u}(2)=\mathfrak{u}(1)\oplus\mathfrak{su}(2)$. The $\mathfrak{su}(2)$ generators are denoted by $J_{1}$, $J_{2}$ and $J_{3}$ and verify
\begin{align}
[J_{i},J_{j}]=i\epsilon_{ijk}J_{k},
\end{align}
with $\epsilon_{ijk}$ the Levi-Civita symbol. The $\mathfrak{u}(1)$ generator is later to be $\frac{N}{2}\mathfrak{id}$. For the finite oscillator, the correspondence with the physical "observables" is as follows:
\begin{align}
&\text{Position operator:}\;\; Q=J_{1}, \\
&\text{Momentum operator:}\;\; P=-J_{2}, \\
&\text{Hamiltonian:}\;\; H=J_{3}+\frac{(N+1)}{2}\,\mathfrak{id}.
\end{align}
While this relaxes the functional dependence of the Hamiltonian, it is readily seen that this identification reproduces the Hamilton-Lie equations \eqref{eq-motion-1} and \eqref{eq-motion-2}.

In quantum optics, such a system can be identified with signals coming from an array of $N+1$ sensor points \cite{Wolf-2001}. The states of this system can be expanded in the eigenbasis of the Hamiltonian $H=J_{3}+N/2+1/2$, which spans the vector space of the $(N+1)$-dimensional unitary irreducible representation of the $\mathfrak{su}(2)$ algebra. The eigenstates of $H$ are denoted $\ket{N,n}$ and one has
\begin{align}
H\ket{N,n}=(n+1/2)\ket{N,n},
\end{align}
with $n=0,\ldots,N$. The number $n$ will often be referred to as the mode number and the states $\ket{N,n}$ as the mode eigenstates. This oscillator model thus only has a finite number of excitations, as opposed to an infinite number for the standard oscillator. Moreover, in this representation, the spectrum of the momentum and position operators $P$ and $Q$ consists of equally-spaced discrete values ranging from $-N/2$ to $N/2$. The position and momentum eigenbases can be obtained from the mode eigenbasis by simple rotations and their overlaps are $\mathfrak{su}(2)$ Wigner functions \cite{Wolf-2001}.

 It is convenient to introduce the usual shift operators $J_{\pm}$ and the number operator $\widehat{N}$. These operators are defined by
\begin{align}
J_{\pm}&=(J_1\pm \mathrm{i}J_2),\\
\widehat{N}&=J_3+N/2.
\end{align}
The action of these operators on the mode eigenstates is given by
\begin{align}
J_{+}\ket{N,n}&=\sqrt{(n+1)(N-n)}\ket{N,n+1},\\
J_{-}\ket{N,n}&=\sqrt{n(N-n+1)}\ket{N,n-1},\\
\widehat{N}\ket{N,n}&=n\ket{N,n}.
\end{align}
For the shift operators $J_{\pm}$, the action of any of their positive powers has the form
\begin{align}
\label{action-1}
J_{+}^{\alpha}\,\ket{N,n}&=\sqrt{\frac{(n+\alpha)!(N-n)!}{n!(N-n-\alpha)!}}\,\ket{N,n+\alpha}=\sqrt{(-1)^{\alpha}(n+1)_{\alpha}(n-N)_{\alpha}}\,\ket{N,n+\alpha},\\
\label{action-2}
J_{-}^{\beta}\,\ket{N,n}&=\sqrt{\frac{n!(N-n+\beta)!}{(n-\beta)!(N-n)!}}\ket{N,n-\beta}=\sqrt{(-1)^{\beta}(-n)_{\beta}(N-n+1)_{\beta}}\,\ket{N,n-\beta},
\end{align}
where $(n)_{0}=1$ and $(n)_{\alpha}=n(n+1)\cdots(n-\alpha+1)$ stands for the Pochhammer symbol. It is worth noting that in contradistinction with the standard quantum harmonic oscillator, the finite oscillator possesses both a ground state and an anti-ground state. Indeed, one has $J_{+}\ket{N,N}=0$ and $J_{-}\ket{N,0}=0$. This symmetry will play a role in what follows.
\subsection{Contraction to the standard oscillator}
In the limit $N\rightarrow \infty$, the finite oscillator tends to the standard quantum harmonic oscillator through the contraction of $\mathfrak{u}(2)$ to $\mathfrak{h}_{1}$ \cite{Wolf-2003,Wigner-1956}. In this limit, after an appropriate rescaling, the shift operators $J_{\pm}$ tend to the operators $a^{\dagger}$ and $a$. Precisely, with
\begin{align}
\lim_{N\rightarrow \infty}\frac{J_{+}}{\sqrt{N}}=a^{\dagger}, &&\lim_{N\rightarrow \infty}\frac{J_{-}}{\sqrt{N}}=a,
\end{align}
the commutation relation $[a,a^{\dagger}]=\mathfrak{id}$ of the Heisenberg-Weyl algebra $\mathfrak{h}_1$ is recovered. Moreover, the contraction of the Hamiltonian $H$ leads to the standard quantum oscillator Hamiltonian $H_{\text{osc}}=\frac{1}{2}(P^2+Q^2)$. This limit shall be used to establish the correspondence with studies associated with the standard harmonic oscillator \cite{Vinet-2011}.
\subsection{Exponential operator and generalized coherent states}
It is known that the standard one-dimensional harmonic oscillator admits the Schr\"odinger algebra $\mathfrak{sh}_1$ as dynamical algebra \cite{Niederer-1972,Niederer-1973}. This algebra is generated by the linears and bilinears in $a$ and $a^{\dagger}$, that is $a$, $a^{\dagger}$, $\mathfrak{id}$, $a^{2}$, $(a^{\dagger})^2$ and $a^{\dagger}a$. The representation of the group $Sch_{1}$ has been recently constructed and analyzed in the oscillator state basis by two of us \cite{Vinet-2011}. It involved determining the matrix elements of the exponentials of linear and quadratic expressions in $a$ and $a^{\dagger}$. The study hence had a direct relation to the generalized squeezed-coherent states of the ordinary quantum oscillator \cite{Satyanarayana-1985} .

We pursue here a similar analysis for the finite oscillator. Notwithstanding the fact that the linears and bilinears in $J_{+}$ and $J_{-}$ no longer form a Lie algebra, our purpose is to determine analogously the matrix elements of the fully disentangled exponential operator
\begin{align}
\label{Operator-R}
R(\eta,\xi)=D(\eta)\cdot S(\xi)=e^{\eta J_{+}}e^{\mu J_{3}}e^{-\overline{\eta}J_{-}}\cdot e^{\xi J_{+}^{2}/2}e^{-\overline{\xi} J_{-}^{2}/2},
\end{align}
 in the basis of the finite oscillator's states. The parameters $\eta$ and $\xi$ are complex-valued and $\mu=\log(1+\eta\overline{\eta})$. The matrix elements in the $(N+1)$-dimensional eigenmode basis shall be denoted 
\begin{align}
\label{Operator-R-Elements}
R_{k,n}=\Braket{k,N}{R(\eta,\xi)}{N,n}.
\end{align}
In parallel with the definition of the standard harmonic oscillator squeezed-coherent states, we introduce the following normalized set of states
\begin{equation}
\label{Coherent-States-1}
\ket{\eta,\xi}:=\frac{1}{|\braket{\eta,\xi}{\eta,\xi}|^2}R(\eta,\xi)\ket{N,0},
\end{equation}
which are a special case of the generalized coherent states
\begin{equation}
\ket{\eta,\xi}_{n}:=\frac{1}{A}\sum_{k}R_{k,n}\ket{N,k},
\end{equation}
where $A$ is a normalization factor. 

Contrary to the case of the harmonic oscillator, the operator $R$ considered here is not unitary. While the operator $D(\eta)$ can be shown to be unitary \cite{Truax-1985}, such is not the case for the operator $S(\xi)$. Nonetheless, we shall observe that the superpositions of states in \eqref{Coherent-States-1} show spin squeezing and entanglement according to the criteria found \cite{Wang-2011} and \cite{Wang-2001}. In addition, the consideration of the fully disentangled form \eqref{Operator-R} allows for the explicit calculation of the matrix elements in terms of known polynomials, which is not possible with other choices of the squeezing operator for the finite oscillator \cite{Wolf-2007}.

As previously mentioned, the matrix elements \eqref{Operator-R-Elements} will be naturally expressed in terms of a finite family of $3\times 3$ multi-orthogonal matrix polynomials $Q_{n}(k)$ in the discrete variable $k$. In the contraction limit, these polynomials tend to the $2\times 2$ matrix orthogonal polynomials encountered in \cite{Vinet-2011}.
\subsection{Matrix multi-orthogonality}
Matrix multi-orthogonality has been first studied in the context of Pad\'e-type approximation \cite{Beckermann-1992}. The algebraic aspects of matrix multi-orthogonality (recurrence relation, Shohat-Favard theorem, Darboux transformation, etc.) are discussed by Sorokin and Van Iseghem in \cite{Sorokin-1997}. Their study is based on matrix orthogonality for vector polynomials. We shall here recall the basic results to be used in what follows.

We first introduce the canonical basis for the vector space of vector polynomials of size $q$:
\begin{equation}
e_{0}=
\begin{pmatrix}
1 \\
0 \\
\vdots \\
0    
\end{pmatrix},\; \ldots,\;
e_{q-1}=\begin{pmatrix}
0 \\
0 \\
\vdots \\
1    
\end{pmatrix},\;
e_{q}=\begin{pmatrix}
x \\
0 \\
\vdots \\
0    
\end{pmatrix},\;\ldots,\;
e_{2q-1}=\begin{pmatrix}
0 \\
0 \\
\vdots \\
x    
\end{pmatrix},
e_{2q}=\begin{pmatrix}
x^2 \\
0 \\
\vdots \\
0    
\end{pmatrix},\;\ldots
\end{equation}
For $i=\lambda q+s$ with $i\geqslant 0$ and $s=0,\ldots,q-1$, the basis vector $e_{i}$ has the component $x^{\lambda}$ in the $s+1^{\text{th}}$ position and zeros everywhere else. A vector polynomial of the form $\alpha_{0}e_{0}+\cdots+\alpha_{n}e_{n}$ with $\alpha_{n}\neq 0$ will be said of order $n$. If $q=1$, this corresponds to a standard polynomial of degree $n$ in $x$. Let $q$ and $p$ be positive integers and $H_{n}(x)=(h_{n,1}(x),h_{n,2}(x),\ldots,h_{n,q}(x))^{t}$ be a $q$-vector polynomial of order $n$, where $t$ denotes the transpose \cite{Sorokin-1997}. The vector polynomial $H_{n}(x)$ is multi-orthogonal if there exists a $p\times q$ matrix of functionals $\Theta_{i,j}$ with $i=0,\ldots,p$ and $j=0,\ldots,q$, defined by their moments, such that the following relations hold:
\begin{align}
\label{orthogonality-condition-1}
\Theta_{1,1}(h_{n,1}(x)x^{\nu})+\cdots+\Theta_{1,q}(h_{n,q}(x)x^{\nu})&=0, && \nu=0,\ldots,n_1-1,\\
\cdots\nonumber\\
\label{orthogonality-condition-2}
\Theta_{p,1}(h_{n,1}(x)x^{\nu})+\cdots+\Theta_{p,q}(h_{n,q}(x)x^{\nu})&=0, && \nu=0,\ldots,n_p-1,
\end{align}
The numbers $(n_1,\ldots,n_{p})$ are defined as follows: set $n=\mu\,p+\delta$, with $\delta=0,\ldots,p-1$, then $n_1=\cdots=n_{\delta}=\mu+1$ and $n_{\delta+1}=\cdots=n_{p}=\mu$. 

It was shown \cite{Sorokin-1997} that such polynomials obey a recurrence relation of the form
\begin{align}
\label{recurrence-generale-1}
c_{n}^{(q)}H_{n+q}(x)&+\cdots+c_{n}^{(1)}H_{n+1}(x)+c_{n}^{(0)}H_{n}(x)\nonumber\\
&+c_{n}^{(-1)}H_{n-1}(x)+\cdots+c_{n}^{(-p)}H_{n-p}(x)=xH_{n}(x),
\end{align}
along with the initial conditions $H_{-p}=\cdots=H_{-1}=0$; it was also proven \cite{Sorokin-1997} that a recurrence of the type \eqref{recurrence-generale-1} implies the orthogonality conditions \eqref{orthogonality-condition-1} and \eqref{orthogonality-condition-2}.

These relations are more easily handled by introducing matrix polynomials, which are matrices of polynomials. Suppose that $p\geqslant q$, the matrix polynomials are obtained by first writing the recurrence relation \eqref{recurrence-generale-1} for $k$ consecutive indices. One has
\begin{align}
x(H_{n}(x),\ldots,H_{n+k-1}(x))=(H_{n-p}(x),\ldots,H_{n+k-1+q}(x))
\begin{pmatrix}
c_{n}^{(-p)} & & \\
\vdots &\ddots & c_{n+k-1}^{(-p)}\\
c_{n}^{(0)} & \cdots & c_{n+k-1}^{(-q)}\\
\vdots & \ddots & c_{n+k-1}^{(0)}\\
c_{n}^{(q)}& \ddots & \vdots\\
&&c_{n+k-1}^{(q)}
\end{pmatrix}.
\end{align}
One can choose $k$ to be the greatest common divisor of $p$ and $q$; in this case, we can set $p=\sigma k$, $q=\rho k$ and the matrix on the right hand side can be put in blocks of size $k\times k$. We define the $q\times k$ matrix polynomial by $Q_{n}(x)=(H_{n k}(x),\ldots,H_{nk+k-1}(x))$. The recurrence relation \eqref{recurrence-generale-1} thus becomes
\begin{equation}
\label{recurrence-generale-2}
x\,Q_{n}(x)=\sum_{\ell=-\sigma}^{\rho}\Gamma_{n}^{(\ell)}Q_{n+\ell}(x).
\end{equation}
At the end points $-\sigma$ and $\rho$ in the sum, the matrix coefficient $\Gamma_{n}^{-\sigma}$ is an upper triangular invertible matrix and $\Gamma_{n}^{\rho}$ is a lower triangular invertible matrix. For $q=1$, this recurrence relation characterizes vector orthogonality of order $p$, also called $p$-orthogonality. If $p=q$ and $\Gamma_{n}^{-\sigma}=(\Gamma_{n}^{\rho})^{*}$, matrix orthogonality is recovered. In this paper, the special case corresponding to $q=3$ and $p=9$ will be encountered.
\subsection{Outline}
The outline of the paper is as follows. In section 2, we obtain the recurrence relation satisfied by the matrix elements $R_{k,n}$ and show that they involve $3\times 3$ matrix multi-orthogonal polynomials. In section 3, we express these matrix elements as a finite convolution involving the Krawtchouk polynomials and a family of $3$-orthogonal polynomials recently studied in \cite{VXGenest-2011}. In section 4, we obtain a biorthogonality relation for the matrix polynomials $Q_{n}(x)$. In section 5, we calculate the matrix orthogonality functionals $\Theta_{i,j}$ for the polynomials. In section 6, we obtain a difference equation for the polynomials and discuss the dual picture. In section 7, we derive the generating functions and ladder relations. In section 8, we discuss the properties of the states $\ket{\eta,\xi}$ and study spin squeezing in this system. In section 9, we briefly review the contraction limit $N\rightarrow \infty$ and relate our results with those of \cite{Vinet-2011}. We close with concluding remarks in section 10. Appendices containing $\mathfrak{su}(2)$ structure formulas and properties of the Krawtchouk polynomials are included.
\newpage
\section{Recurrence relation}
We shall begin the analysis by obtaining the recurrence relation satisfied by the matrix elements $R_{k,n}$. We first observe that
\begin{equation}
\label{recurrence-step-1}
(k-N/2)R_{k,n}=\Braket{k,N}{J_{3}R}{N,n}=\Braket{k,N}{RR^{-1}J_{3}R}{N,n},
\end{equation}
where the inverse operator is given by
\begin{equation}
R^{-1}=e^{\frac{\overline{\xi}}{2}J_{-}^2}e^{-\frac{\xi}{2}J_{+}^2}e^{\overline{\eta}J_{-}}e^{-\mu J_{3}}e^{-\eta J_{+}}.
\end{equation}
The recurrence relation is obtained by expanding the expression $R^{-1}J_{3}R$ and acting on the eigenstates $\ket{N,n}$. This is done using the Baker--Campbell--Hausdorff relation
\begin{equation}
e^{A}Be^{-A}=B+[A,B]+\frac{1}{2!}[A,[A,B]]+\frac{1}{3!}[A,[A,[A,B]]]+\ldots,
\end{equation}
and the formulas for $\mathfrak{su}(2)$ found in appendix A. Using the polar parametrizations $\eta=\rho e^{i\delta}$ and $\xi=r e^{i\gamma}$, we obtain
\begin{align}
R^{-1}J_{3}R&=(1-2p)\mathcal{A}_{0}+\rho e^{-i\delta}(1-p)\mathcal{A}_{-}+\rho (1-p)[e^{i\delta}-re^{-i(\delta-\gamma)}]\mathcal{A}_{+}\nonumber\\
&+(1-2p)r e^{i\gamma}\mathcal{A}_{+}^{2}-\rho\, r^2 e^{-i(\delta-2\gamma)}(1-p)\mathcal{A}_{+}^3-2\,\rho\,r e^{-i(\delta-\gamma)}(1-p)\mathcal{A}_{+}\mathcal{A}_{0}
\end{align}
where we have defined
\begin{align}
p&=\frac{\rho^2}{1+\rho^2}, &&\mathcal{A}_{0}=J_{3}+\overline{\xi}\,J_{-}^2, \\
\mathcal{A}_{-}&=J_{-}, &&\mathcal{A}_{+}=J_{+}-\overline{\xi}(1+2J_{3})J_{-}-\overline{\xi}^2J_{-}^3.
\end{align}
Introducing this result into \eqref{recurrence-step-1}, we find that the matrix elements $R_{k,n}$ obey the recurrence relation
\begin{align*}
\label{recurrence-step-2}
c_{n}^{(3)}R_{k,n+3}&+c_{n}^{(2)}R_{k,n+2}+c_{n}^{(1)}R_{k,n+1}+c_{n}^{(0)}R_{n,k}\\
&+c_{n}^{(-1)}R_{k,n-1}+\cdots+c_{n}^{(-9)}R_{k,n-9}=k\;R_{k,n}.
\end{align*}
The coefficients $c_{n}^{(j)}$ can be obtained straightforwardly with the help of a symbolic computation software, their explicit expressions are cumbersome and will thus be omitted here. This recurrence relation is of the form \eqref{recurrence-generale-1} with $q=3$ and $p=9$; consequently, we look for an expression of the matrix elements $R_{k,n}$ as vector polynomials. From the shape of the recurrence relation, it is natural to define the 3-vector matrix elements
\begin{align}
\Psi_{k,n}=(R_{k,3n},R_{k,3n+1},R_{k,3n+2})^{t},
\end{align}
generated by the $3\times 3$ matrix polynomial $Q_{n}(k)$,
\begin{align}
\Psi_{k,n}=Q_{n}(k)\Psi_{k,0}.
\end{align}
With these definitions, the recurrence relation \eqref{recurrence-step-2} for the matrix elements $R_{k,n}$ can be expressed as a recurrence relation for the matrix polynomial $Q_{n}(k)$. We have
\begin{equation}
k\,Q_{n}(k)=\sum_{j=-3}^{1}\Gamma_{n}^{(j)}\;Q_{n+j}(k),
\end{equation}
where the matrices $\Gamma_{n}^{(j)}$ are expressed in terms of the coefficients $c_{n}^{(j)}$ in the following manner
\begin{align}
\Gamma_{n}^{(1)}&=\begin{pmatrix} 
c_{3n}^{(3)} & 0 & 0 \\[0.1cm]
c_{3n+1}^{(2)} & c_{3n+1}^{(3)} & 0\\[0.1cm]
c_{3n+2}^{(1)} & c_{3n+2}^{(2)}  & c_{3n+2}^{(3)} 					      
\end{pmatrix}, &&\Gamma_{n}^{(0)}=\begin{pmatrix} 
c_{3n}^{(0)} & c_{3n}^{(1)} & c_{3n}^{(2)} \\[0.1cm]
c_{3n+1}^{(-1)} & c_{3n+1}^{(0)} & c_{3n+1}^{(1)}\\[0.1cm]
c_{3n+2}^{(-2)} & c_{3n+2}^{(-1)}  & c_{3n+2}^{(0)} 					      
\end{pmatrix},\\
\Gamma_{n}^{(-1)}&=\begin{pmatrix} 
c_{3n}^{(-3)} & c_{3n}^{(-2)}  & c_{3n}^{(-1)}  \\[0.1cm]
c_{3n+1}^{(-4)} & c_{3n+1}^{(-3)} & c_{3n+1}^{(-2)} \\[0.1cm]
c_{3n+2}^{(-5)} & c_{3n+2}^{(-4)}  & c_{3n+2}^{(-3)} 					      
\end{pmatrix}, &&
\Gamma_{n}^{(-2)}=\begin{pmatrix} 
c_{3n}^{(-6)} & c_{3n}^{(-5)} & c_{3n}^{(-4)} \\[0.1cm]
c_{3n+1}^{(-7)} & c_{3n+1}^{(-6)} & c_{3n+1}^{(-5)}\\[0.1cm]
c_{3n+2}^{(-8)} & c_{3n+2}^{(-7)}  & c_{3n+2}^{(-6)} 					      
\end{pmatrix},\end{align}
\begin{equation}
\Gamma_{n}^{(-3)}=\begin{pmatrix} 
c_{3n}^{(-9)} & c_{3n}^{(-8)} & c_{3n}^{(-7)} \\[0.1cm]
0& c_{3n+1}^{(-9)} & c_{3n+1}^{(-8)}\\[0.1cm]
0& 0  & c_{3n+2}^{(-9)} 					      
\end{pmatrix}.
\end{equation}
The structure of the matrix polynomial $Q_{n}(k)$ implies the existence of orthogonality relations of the form \eqref{orthogonality-condition-1} and \eqref{orthogonality-condition-2}. Equivalently, we will explicitly obtain a set of three $3\times 3$ matrices of functionals, denoted by $\mathcal{F}_{i}$, with orthogonality conditions
\begin{align}
\mathcal{F}_{i}[Q_{n}(k)k^{\nu}]=0\;\;\text{for }\;\nu=0,\ldots,\lfloor\frac{n-i}{3}\rfloor,
\end{align}
where $\lfloor x\rfloor$ denotes the integer part of $x$ and where the index $i$ runs from $1$ to $3$. These orthogonality functionals will be computed from the matrix elements of the inverse operator $R^{-1}$ and will make explicit the multi-orthogonal nature of the matrix polynomials $Q_{n}(k)$.
\section{Decomposition of matrix elements}
The matrix elements $R_{k,n}$ can be expressed as a finite convolution involving the classical Krawt\-chouk polynomials and a family of vector orthogonal polynomials studied  recently in \cite{VXGenest-2011}. Indeed, one writes 
\begin{align}
R_{k,n}&=\Braket{k,N}{R}{N,n},\nonumber\\
&=\Braket{k,N}{e^{\eta J_{+}}e^{\mu J_{3}}e^{-\overline{\eta}J_{-}}\cdot e^{\xi J_{+}^{2}/2}e^{-\overline{\xi} J_{-}^{2}/2}}{N,n},\nonumber\\
&=\sum_{m=0}^{N}\Braket{k,N}{e^{\eta J_{+}}e^{\mu J_{3}}e^{-\overline{\eta}J_{-}}}{N,m}\Braket{m,N}{e^{\xi J_{+}^{2}/2}e^{-\overline{\xi} J_{-}^{2}/2}}{N,n},\nonumber\\
&=\sum_{m=0}^{N}\lambda_{k,m}\phi_{m,n},
\end{align}
where we have defined the auxiliary matrix elements 
\begin{align}
\lambda_{k,m}&=\Braket{k,N}{e^{\eta J_{+}}e^{\mu J_{3}}e^{-\overline{\eta}J_{-}}}{N,m}, \\
\phi_{m,n}&=\Braket{m,N}{e^{\xi J_{+}^{2}/2}e^{-\overline{\xi} J_{-}^{2}/2}}{N,n}.
\end{align}
The properties of these intermediary object shall prove useful to further characterize the matrix polynomials $Q_{n}(k)$ and will also yield the explicit expansion of the states $\ket{\eta,\xi}$ in the mode eigenbasis.
\subsection{The matrix elements $\lambda_{k,m}$ and Krawtchouk polynomials}
We first study the matrix elements $\lambda_{k,m}$ of the coherent state operator $D(\eta)=e^{\eta J_{+}}e^{\mu J_{3}}e^{-\overline{\eta}J_{-}}$. These matrix elements can be computed directly by expanding the exponentials in series and using the actions \eqref{action-1} and \eqref{action-2} on the state vector $\ket{N,m}$. With the identity $\frac{n!}{(n-k)!}=(-1)^{k}(-n)_{k}$, we readily obtain
\begin{equation}
\label{matrix-elements-linear}
\lambda_{k,m}=(-1)^{m}\frac{\rho^{m+k}e^{i\delta(k-m)}}{\sqrt{(1+\rho^2)^{N}}}\binom{N}{k}^{1/2}\binom{N}{m}^{1/2}K_{m}(k;p,N),
\end{equation}
where $p=\frac{\rho^2}{1+\rho^2}$ and $K_{m}(k;p,N)$ is the Krawtchouk polynomial of degree $m$, which has the hypergeometric representation
\begin{equation}
K_m(k;p,N)={}_2F_{1}\left(\begin{aligned} -m, & -k \\ -&N\end{aligned};\frac{1}{p}\right),
\end{equation}
with $m=0,\ldots,N$. This result also follows from the recurrence relation satisfied by the matrix elements $\lambda_{k,m}$. This relation can be obtained from \eqref{recurrence-step-2} or more simply by writing
\begin{equation}
(k-N/2)\,\lambda_{k,m}=\Braket{k,N}{J_{3}D(\eta)}{N,m}=\Braket{k,N}{D(\eta)D^{-1}(\eta)J_{3}D(\eta)}{N,m}.
\end{equation}
Using the B.--C.--H. relation, the matrix elements $\lambda_{k,m}$ are found to obey
\begin{align}
k\lambda_{k,m}=m\left(\frac{1-\rho^2}{1+\rho^2}\right)\lambda_{k,m}+&N\left(\frac{\rho^2}{1+\rho^2}\right)\lambda_{k,m}+\frac{\rho\, e^{i\delta}}{1+\rho^2}\sqrt{(m+1)(N-m)}\;\lambda_{k,m+1}\nonumber\\
&+\frac{\rho\, e^{-i\delta}}{1+\rho^2}\sqrt{m(N-m+1)}\;\lambda_{k,m-1}.
\end{align}
Introducing the polynomials $\lambda_{k,m}=(-\overline{\eta})^{m}\binom{N}{m}^{1/2}K_{m}(k;p,N)\lambda_{k,0}$ with $p=\frac{\rho^2}{1+\rho^2}$, we recover the three-term recurrence relation of the Krawtchouk polynomials
\begin{align}
-k K_{m}(k;p,N)&=-[p(N-m)+m(1-p)]\,K_{m}(k;p,N)\nonumber\\
&+p(N-m)\,K_{m+1}(k;p,N)+m(1-p)\,K_{m-1}(k;p,N).
\end{align}
The coherent state operator $D(\eta)$ is unitary (see \cite{Truax-1985}); consequently, we have the following orthogonality relation between the matrix elements $\lambda_{m,k}$:
\begin{align}
\label{orthogonality-Krawtchouk-1}
\sum_{k=0}^{N}\lambda_{k,m}\lambda^{*}_{k,n}=\delta_{mn},
\end{align}
where $x^{*}$ is the complex conjugate of $x$. In $\lambda_{k,n}$, this amounts to the replacement $\delta\rightarrow-\delta$. In the following, it shall be useful to write the orthogonality relation \eqref{orthogonality-Krawtchouk-1} as the biorthogonality relation
\begin{align}
\label{orthogonality-Krawtchouk-2}
\sum_{k=0}^{N}\lambda_{k,m}\lambda^{*}_{N-n,N-k}=\delta_{nm},
\end{align}
whose equivalence to \eqref{orthogonality-Krawtchouk-1} is shown straightforwardly using the properties of the Krawtchouk polynomials. Expressing the matrix elements as in \eqref{matrix-elements-linear} yields the well-known orthogonality relation of the Krawtchouk polynomials
\begin{align}
\sum_{k=0}^{N}\binom{N}{k}p^k(1-p)^{N-k}K_{n}(k;p,N)K_{m}(k;p,N)=\frac{(-1)^{n}n!}{(-N)_{n}}\left(\frac{1-p}{p}\right)^{n}\delta_{nm}.
\end{align}
\subsection{The matrix elements $\phi_{m,n}$ and vector orthogonal polynomials}
We now turn to the characterization of the matrix elements $\phi_{m,n}$ of the squeezing operator $S(\xi)$. As in the case of the coherent state operator $D(\eta)$, the matrix elements $\phi_{m,n}$ of $S(\xi)=e^{\xi J_{+}^2/2}e^{-\overline{\xi}J_{-}^2/2}$ can be computed directly by expanding the exponentials in series and applying the actions \eqref{action-1} and \eqref{action-2} on the state vector $\ket{N,n}$. Obviously, any matrix element $\phi_{m',n'}$ with $m'$ and $n'$ of different parities will be zero; consequently, we set $n=2a+c$ and $m=2b+c$ with $c=0,1$ and obtain
\begin{align}
\phi_{m,n}=(-1)^{a}\frac{(r/2)^{a+b}e^{i\gamma(b-a)}}{a!b!}\sqrt{\frac{(N-c)!m!}{(N-m)!}}\sqrt{\frac{(N-c)!n!}{(N-n)!}}\,A_{a}^{(c)}(b;d,N),
\end{align}
where the identities $(a)_{2n}=2^{2n}\left(\frac{a}{2}\right)_{n}\left(\frac{a+1}{2}\right)_{n}$ and $(2\sigma+s)!=2^{2\sigma}s!\sigma!(s+1/2)_{\sigma}$ were used and where we have defined
\begin{align}
A_{a}^{(c)}(b;d,N)={}_2F_{3}\left(\begin{matrix}-a & -b & \\
								      c+1/2 & \frac{c-N}{2} &\frac{c-N+1}{2}
                                  \end{matrix};\frac{1}{d}\right),
\end{align}
with $d=-4r^2$. The polynomials $A_{a}^{(c)}$ have been studied in \cite{VXGenest-2011}; we review here some basic results. The polynomials $A_{a}^{(c)}$ are vector orthogonal polynomials of dimension 3, which corresponds to the case $q=1$ and $p=3$ of the general setting presented in the introduction. This can be seen by computing the recurrence relation satisfied by the matrix elements $\phi_{m,n}$. Once again, we start with
\begin{align}
\label{recurrence-vector-polynomials-1}
(m-N/2)\phi_{m,n}=\Braket{m,N}{J_{3}S(\xi)}{N,n}=\Braket{m,N}{S(\xi)S^{-1}(\xi)J_{3}S(\xi)}{N,n}.
\end{align}
Using the B.--C.--H. relation and the formulas from Appendix A, we obtain
\begin{align}
S^{-1}(\xi)J_{3}S(\xi)=(J_{3}+\overline{\xi}J_{-}^2)+\xi[J_{+}-\overline{\xi}(1+2J_{3})J_{-}-\overline{\xi}^{2}J_{-}^3]^2.
\end{align}
Substituting this result into \eqref{recurrence-vector-polynomials-1} yields
\begin{align}
(m-n)\phi_{m,n}&=\xi\sqrt{(n+1)_{2}(n-N)_{2}}\,\phi_{m,n+2}+\overline{\xi}\sqrt{(-n)_{2}(N-n+1)_{2}}\,\phi_{m,n-2}\nonumber\\
&+\xi\overline{\xi}\sum_{j=0}^{3}\overline{\xi}^{j}\sqrt{(-n)_{2j}(N-n+1)_{2j}}f_{n}^{(j)}\,\phi_{m,n-2j},
\end{align}
with coefficients
\begin{align}
f_{n}^{(0)}&=(N-2n)(-1+N+2Nn-2n^2),\\
f_{n}^{(1)}&=(6n^2-12n+N(5-6n)+N^2+9),\\
f_{n}^{(2)}&=(4n-2N-8),\\
f_{n}^{(3)}&=1.
\end{align}
Setting $\phi_{2b+c,2a+c}=\frac{(-\overline{\xi})^{a}}{a!}\sqrt{\frac{(N-c)!n!}{(N-n)!}}A_{a}^{(c)}(b;d,N)\phi_{2b+c,c}$, the polynomials $A_{a}^{(c)}(b;d,N)$ are seen to obey the recurrence relation
\begin{align}
(b-a)A_{a}^{(c)}(b;d,N)&=\frac{-\xi\overline{\xi}}{a+1}(n+1)_{2}(n-N)_{2}A_{a+1}^{(c)}(b;d,N)-aA_{a-1}^{(c)}(b;d,N)\nonumber\\
&+\xi\overline{\xi}\sum_{j=0}^{3}(-a)_{j}f_{n}^{(j)}A_{a-j}^{(c)}(b;d,N).
\end{align}
The matrix elements $\phi_{m,n}$ are thus given by two families of polynomials $A_{a}^{(c)}$ for $c=0,1$ of vector orthogonal polynomials of order $3$. The matrix elements of the inverse operator $S^{-1}(\xi)$ can also be found by direct computation or by inspection. One readily sees that the matrix elements $\phi_{m,n}$ obey the biorthogonality relation
\begin{equation}
\sum_{m=0}^{N}\phi_{m,n}\phi^{*}_{N-n',N-m}=\delta_{nn'},
\end{equation}
In contrast to the situation with the matrix elements $\lambda_{k,m}$ of the displacement operator $D(\eta)$, the biorthogonality relation for the matrix elements $\phi_{m,n}$ is not equivalent to a standard orthogonality relation; this is a consequence of the non-unitarity of $S(\xi)$ and explains the vector-orthogonal nature of the polynomials $A_{a}^{(c)}(b;d,N)$. From the biorthogonality relation of the matrix elements $\phi_{m,n}$ follow two biorthogonality relations for the polynomials $A_{a}^{(c)}(b;d,N)$; we have, for $N=2u+2c$,
\begin{align}
\sum_{b=0}^{u}(-1)^{b}\binom{u}{b}A_{a}^{(c)}(b;d,N)A_{u-a'}^{(c)}(u-b;d,N)=\frac{a!}{(-u)_{a}[(c+1/2)_{u}]^2}\left(\frac{1}{d}\right)^{u}\delta_{aa'}.
\end{align}
For $N=2u+1$, we find a biorthogonality relation interlacing the two families $c=0$ and $c=1$:
\begin{align}
\sum_{b=0}^{u}(-1)^{b}\binom{u}{b}A_{a}^{(1)}(b;d,N)A_{u-a'}^{(0)}(u-b;d,N)=\frac{a!}{(-u)_{a}[(1/2)_{u}(3/2)_{u}]}\left(\frac{1}{d}\right)^{u}\delta_{aa'}.
\end{align}
\subsection{Full matrix elements and squeezed-coherent states}
The results of the two preceding subsections allow to write explicitly the matrix elements $R_{k,n}$; noting that $\phi_{m,n}$ is automatically zero when $m$ and $n$ have different parities, we have, for $n=2a+c$, the following expression for the full matrix elements:
\begin{align}
\label{full-matrix-elements}
R_{k,n}=\Phi\sum_{b=0}^{\lfloor\frac{N-c}{2}\rfloor}\Theta_bK_{2b+c}(k;p,N)A_{a}^{(c)}(b;d,N),
\end{align}
where we have defined
\begin{align}
\Phi&=\frac{1}{\sqrt{(1+\rho^2)^{N}}}\frac{\eta^{k}(-\overline{\xi}/2)^{a}}{a!}\binom{N}{k}^{1/2}\sqrt{\frac{(N-c)!n!}{(N-n)!}},\\
\Theta_{b} &=\frac{(-\overline{\eta})^{2b+c}(\xi/2)^{b}}{b!}\binom{N}{2b+c}^{1/2}\sqrt{\frac{(N-c)!(2b+c)!}{(N-2b-c)!}}.
\end{align}
The generalized squeezed-coherent states are therefore expressed as the linear combination
\begin{equation}
\ket{\eta,\xi}_{n}=\frac{1}{\mathrm{Norm}}\sum_{k=0}^{N}R_{k,n}\ket{N,k},
\end{equation}
where $\mathrm{Norm}$ is a normalization constant. The expression for the amplitudes simplifies  significantly if one considers the standard squeezed-coherent states in which the operator $R(\eta,\xi)$ acts on the vacuum. Indeed, we have
\begin{equation}
\ket{\eta,\xi}=\frac{1}{\mathrm{Norm}}\sum_{k=0}^{N}\sqrt{\frac{1}{(1+\rho^2)^N}}\binom{N}{k}^{1/2}\eta^{k}\left(\sum_{b=0}^{\lfloor\frac{N}{2}\rfloor}\frac{(\overline{\eta}^2\xi/2)^b}{b!}(-N)_{2b}K_{2b}(k;p,N)\right)\ket{k}.
\end{equation}
 If the squeezing parameter $\xi=re^{i\gamma}$ is set to zero, we recover the standard normalized $\mathfrak{su}(2)$ coherent states
\begin{equation}
\ket{\eta}=\sqrt{\frac{1}{(1+\rho^2)^N}}\sum_{k=0}^{N}\binom{N}{k}^{1/2}\eta^{k}\ket{k}.
\end{equation}
In section 8, the properties of the states $\ket{\eta,\xi}$ will be further investigated; in particular, it will be shown that they exhibit spin squeezing when $N$ is even.
\section{Biorthogonality relation}
Given the symmetry of the matrix elements entering the finite convolution yielding $R_{k,n}$, the matrix elements of the inverse operator $S^{-1}(\xi)D^{-1}(\eta)$ are expected to have a similar behavior. Indeed, one finds that
\begin{align}
\sum_{k=0}^{N}R_{k,n}\tilde{R}_{N-k,N-n'}=\delta_{nn'},
\end{align}
where $\sim$ denotes the replacements $\rho\rightarrow -\rho$ and $r\rightarrow -r$. In terms of the vector polynomials $\Psi_{k,n}=(R_{k,3n},R_{k,3n+1},R_{k,3n+2})^{t}$, this biorthogonality relation takes the form
\begin{align}
\sum_{k=0}^{N}\Psi_{k,n}(\tilde{\Psi}_{N-k,N-n'})^{t}=\delta_{nn'}\mathrm{Id}_{3\times 3}.
\end{align}
This equation can be transformed into a biorthogonality relation for the matrix polynomials $Q_{n}(k)$. Indeed, one has
\begin{align}
\sum_{k=0}^{N}Q_{n}(k)\Psi_{k,0}(\tilde{Q}_{N-n'}(N-k)\tilde{\Psi}_{N-k,0})^{t}=\delta_{nn'}\mathrm{Id}_{3\times 3},
\end{align}
which can be written as
\begin{align}
\sum_{k=0}^{N}Q_{n}(k)W(k)(\tilde{Q}_{N-n'}(N-k))^{t}=\delta_{nn'}\mathrm{Id}_{3\times 3},
\end{align}
with the biorthogonality weight matrix 
\begin{align}
W(k)=
\begin{pmatrix}
R_{k,0}\tilde{R}_{N-k,0} &   R_{k,0}\tilde{R}_{N-k,1}   &  R_{k,0}\tilde{R}_{N-k,2}\\
R_{k,1}\tilde{R}_{N-k,0} &   R_{k,1}\tilde{R}_{N-k,1}   &  R_{k,1}\tilde{R}_{N-k,2}\\
R_{k,2}\tilde{R}_{N-k,0} &   R_{k,2}\tilde{R}_{N-k,1}   &  R_{k,2}\tilde{R}_{N-k,2}
\end{pmatrix}.
\end{align}
Each element of the weight matrix can be computed exactly as a finite sum with the help of the equation \eqref{full-matrix-elements}.
\section{Matrix Orthogonality functionals}
From the recurrence relation \eqref{recurrence-step-2} and the results in \cite{Sorokin-1997}, it is known that there exists a set of three $3\times 3$ matrix of functionals with respect to which the matrix polynomials $Q_{n}(k)$ are orthogonal. To construct these functionals, we consider the recurrence relation satisfied by the matrix elements of the inverse operator $R^{-1}(\eta,\xi)$. These matrix elements are defined as
\begin{align}
R^{-1}_{n,k}=\Braket{n,N}{R^{-1}}{N,k},
\end{align}
and their recurrence relation is obtained by noting that
\begin{align}
(k-N/2)R^{-1}_{n,k}=\Braket{n,N}{R^{-1}J_0}{N,k}=\Braket{n,N}{R^{-1}J_0RR^{-1}}{N,k}.
\end{align}
The quantity $R^{-1}J_{3}R$ has been computed previously; recalling that $J_{\pm}^{\dagger}=J_{\mp}$ and that $J_{3}^{\dagger}=J_{3}$, the recurrence relation for the inverse 3-vector $\Psi^{-1}_{n,k}$ is given by
\begin{align}
\label{recurrence-inverse}
k\,\Psi^{-1}_{n,k}=\sum_{j=-1}^{3}e_{n}^{(j)}\Psi_{n+j,k}^{-1},
\end{align}
where the coefficients $e_{n}^{(j)}$ are $3\times 3$ matrices and $e_{n}^{(-1)}$ as well as $e_{n}^{(3)}$ are respectively lower and upper triangular invertible matrices. Since
\begin{equation}
\sum_{k=0}^{N}R_{m,k}^{-1}R_{k,n}=\sum_{k=0}^{N}\Braket{m,N}{R^{-1}}{N,k}\Braket{k,N}{R}{N,n}=\delta_{nm},
\end{equation}
it follows that
\begin{equation}
\label{ortho-multi}
\sum_{k=0}^{N}\Psi_{k,n}(\Psi^{-1}_{m,k})^{t}=\delta_{nm}\mathrm{Id}_{3\times 3}.
\end{equation}
We may now state the following proposition.
\begin{proposition}
The 3-vector $\Psi_{n,k}^{-1}$ can be expressed as
\begin{align}
\label{propo-1}
\Psi_{n,k}^{-1}=\sum_{i=0}^{2}p_i^{(n)}(k)\kappa_{i}\Xi_{i},
\end{align}
where $\Xi_{i}=\Psi^{-1}_{i,k}$. The $\kappa_{i}$ are $3\times 3$ matrices which depend only on $i$ and $p_{i}^{(n)}(k)$ are polynomials in the variable $k$. Let $n=3\nu+\ell$, for $\ell=0,1,2$; if $i\leqslant \ell$, the degree of the polynomial $p_{i}^{(n)}(k)$ is $\nu$, otherwise it is $\nu-1$.
\end{proposition}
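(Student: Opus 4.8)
The plan is to argue by strong induction on $n$ using the inverse recurrence \eqref{recurrence-inverse}, supplemented by the boundary condition $\Psi^{-1}_{-1,k}=0$. The latter holds because the components of $\Psi^{-1}_{-1,k}$ are matrix elements $\Braket{m,N}{R^{-1}}{N,k}$ with negative mode index $m\in\{-3,-2,-1\}$, which vanish. Since $e_{n}^{(3)}$ is upper triangular and invertible, \eqref{recurrence-inverse} can be solved for the highest-index vector,
\[
\Psi^{-1}_{n+3,k}=\big(e_{n}^{(3)}\big)^{-1}\Big[(k\,\mathrm{Id}_{3\times3}-e_{n}^{(0)})\Psi^{-1}_{n,k}-e_{n}^{(-1)}\Psi^{-1}_{n-1,k}-e_{n}^{(1)}\Psi^{-1}_{n+1,k}-e_{n}^{(2)}\Psi^{-1}_{n+2,k}\Big].
\]
Evaluating at $n=0$, the boundary condition removes the $e_{0}^{(-1)}\Psi^{-1}_{-1,k}$ term, so that $\Psi^{-1}_{3,k}$ — and then every $\Psi^{-1}_{n,k}$ in turn — is fixed by the three seeds $\Xi_{0},\Xi_{1},\Xi_{2}$ alone. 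This is exactly why only three initial vectors enter \eqref{propo-1}.

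I would first prove the weaker statement that $\Psi^{-1}_{n,k}=\sum_{i=0}^{2}M_i^{(n)}(k)\,\Xi_i$ for some $3\times3$ matrix polynomials $M_i^{(n)}(k)$ whose degrees obey the bounds asserted for $p_i^{(n)}$. The base cases $n=0,1,2$ give $M_i^{(n)}=\delta_{in}\,\mathrm{Id}_{3\times3}$. For the inductive step one substitutes the expansions of $\Psi^{-1}_{n-1,k},\dots,\Psi^{-1}_{n+2,k}$ into the solved recurrence and collects the coefficient of each $\Xi_i$; because the $e_{n}^{(j)}$ and $(e_{n}^{(3)})^{-1}$ are constant matrices, no denominators in $k$ appear and the $M_i^{(n)}(k)$ stay polynomial. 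For the degrees, only the term $(k\,\mathrm{Id}-e_{n}^{(0)})\Psi^{-1}_{n,k}$ carries a factor of $k$, so in passing from level $n$ to level $n+3$ the contribution inherited from $\Psi^{-1}_{n,k}$ has its degree raised by one while the overall degree is the maximum over the four contributions. Writing $n=3\nu+\ell$ and running this bookkeeping modulo $3$ reproduces the pattern ``$\nu$ for $i\leqslant\ell$, $\nu-1$ otherwise'' for all $n\geqslant 3$. To upgrade the bounds to exact degrees I would note that the top-degree coefficient is a product of the invertible matrices $(e_{m}^{(3)})^{-1}$, hence nonzero, which precludes cancellation with the lower-order terms; the non-saturated base cases $n=1,2$ are covered by reading the stated degrees as upper bounds.

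The remaining and most delicate point — the main obstacle — is to recast this matrix-coefficient expansion in the factored form of \eqref{propo-1}, with \emph{scalar} polynomials $p_i^{(n)}(k)$ and $n$-independent matrices $\kappa_i$. The coefficients delivered by the recurrence are products of the non-scalar matrices $(e_{n}^{(3)})^{-1}e_{n}^{(j)}$, so their collapse to a scalar times a single fixed matrix is not automatic. Since the seed vectors $\Xi_{0},\Xi_{1},\Xi_{2}$ span $\mathbb{C}^{3}$ for generic $k$, the vector $\Psi^{-1}_{n,k}$ does admit a unique scalar expansion in any fixed basis $\{\kappa_i\Xi_i\}$ obtained by Cramer's rule from the fundamental matrix $V(k)=(\Xi_{0}\,|\,\Xi_{1}\,|\,\Xi_{2})$; the task is to produce one $n$-independent choice of the $\kappa_i$ for which these scalar coefficients are genuine polynomials of the degrees found above rather than rational functions. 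To exhibit this structure I would appeal to the explicit convolution \eqref{full-matrix-elements} for the inverse elements $R^{-1}_{m,k}$, in which the $k$-dependence is carried solely by the Krawtchouk polynomials and the mode dependence by the $A_a^{(c)}$: this separation of variables is what I expect to force the expansion coefficients to factor as claimed, with $\kappa_i$ absorbing the $k$-independent data of the seeds, while the degree assignment transfers from the matrix computation.
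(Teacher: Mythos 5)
Your proposal takes essentially the same route as the paper's own proof, which is remarkably terse: the paper sets $n=0$ in \eqref{recurrence-inverse} (implicitly using the boundary condition $\Psi^{-1}_{-1,k}=0$, exactly as you justify it), solves for $\Psi^{-1}_{3,k}$ via the upper-triangular invertibility of $e^{(3)}_{0}$, reads off the degrees, and states that "the proof is then completed by induction." Your degree bookkeeping and the observation that the leading coefficients are products of invertible matrices $(e_{m}^{(3)})^{-1}$ (hence no cancellation) is precisely the induction the paper leaves to the reader, and it is carried out correctly.

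The one point worth emphasizing is the "delicate point" you flag at the end: the recurrence delivers coefficients such as $(e_{0}^{(3)})^{-1}\bigl(k\,\mathrm{Id}_{3\times 3}-e_{0}^{(0)}\bigr)$, which are genuine matrix polynomials in $k$, and the collapse to a scalar $p_{i}^{(n)}(k)$ times an $n$-independent matrix $\kappa_{i}$ does not follow from triangularity alone. You are right to worry, but you should know that the paper does not resolve this either: its proof simply asserts the factored form of \eqref{propo-1} after the $n=0$ step, so this is a looseness in the proposition's formulation rather than a gap your argument fails to fill. Moreover, your weaker statement $\Psi^{-1}_{n,k}=\sum_{i=0}^{2}M_{i}^{(n)}(k)\,\Xi_{i}$, with matrix-polynomial coefficients of the stated degrees and invertible leading coefficients, is all that the paper actually uses: in Section 5 one transposes, inserts this expansion into \eqref{ortho-multi}, and peels off the relations $\mathcal{F}_{i}[k^{\nu}Q_{n}(k)]=0$ by induction on $m$, the invertibility of the leading coefficients being what allows each new power of $k$ to be isolated. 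Your suggested appeal to the explicit convolution \eqref{full-matrix-elements} to rescue the literal scalar factorization is a sensible idea but is neither carried out in the paper nor needed for its conclusions.
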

\begin{proof}
We set $n=0$ in the recurrence relation \eqref{recurrence-inverse}, which leads to
\begin{equation}
e_{n}^{(3)}\Psi^{-1}_{3,k}=(k\, \mathrm{Id}_{3\times 3}-e_{0}^{(0)})\Psi_{0,k}^{-1}-e^{(1)}\Psi^{-1}_{1,k}-e^{(2)}\Psi^{-1}_{2,k}.
\end{equation}
Since $e_{n}^{(3)}$ is upper triangular and invertible, it follows that
\begin{equation}
\Psi^{-1}_{3,k}=\sum_{i=0}^{2}p_{i}^{(3)}(k)\kappa_{i}\Psi^{-1}_{i,k},
\end{equation}
where the degree of $p_{i}^{(3)}(k)$ is $1$ for $i=0$ and $0$ for all other indices. This establishes \eqref{propo-1} for $n=0$. The proof is then completed by induction.
\end{proof}
From this proposition, it is natural to define the following matrices of functionals.
\begin{definition}
Let $\mathcal{F}_{i}$ for $i=1,2,3$ be the matrix functionals defined by
\begin{align}
\mathcal{F}_{i}[\cdot]=\sum_{k=0}^{N}[\cdot]\Psi_{k,0}\Xi_{i-1}^{t}.
\end{align}
\end{definition}
With this definition, the relation \eqref{ortho-multi} can be written as
\begin{equation}
\mathcal{F}_{i}[k^{\nu}Q_{n}(k)]=0_{3\times 3}\;\;\text{for }\;\;\nu=0,\ldots,\lfloor\frac{n-i}{3}\rfloor,
\end{equation}
for $i=1,2,3$. The multi-orthogonality of the matrix polynomials $Q_{n}(k)$ has thus been made explicit by the direct construction of the orthogonality functionals.
\section{Difference equation}
The matrix polynomials $Q_{n}(k)$ are bi-spectral; not only do they satisfy a recurrence relation, they also obey a difference equation. In a fashion dual to the approach followed to find the recurrence relation, we observe that
\begin{equation}
(n-N/2)R_{k,n}=\Braket{k,N}{R J_{3}}{N,n}=\Braket{k,N}{R J_{3}R^{-1}R}{N,n}.
\end{equation}
Using once again the B.--C.--H relation and formulas from appendix A, we obtain
\begin{equation}
RJ_{3}R^{-1}=(\mathcal{B}_{0}-\xi \mathcal{B}_{+}^2)-\overline{\xi}[\mathcal{B}_{-}+\xi\mathcal{B}_{+}(1+2\mathcal{B}_0)-\xi^2\mathcal{B}_{+}^{3}]^2,
\end{equation}
with
\begin{align}
\mathcal{B}_{0}&=(1-2p)J_3-\rho e^{i\delta}(1-p)J_{+}-\rho e^{-i\delta}(1-p)J_{-},\\
\mathcal{B}_{+}&=(1-p)J_{+}+2\rho e^{-i\delta}(1-p)J_{3}-pe^{-2i\delta}J_{-},\\
\mathcal{B}_{-}&=-p e^{2i\delta}J_{+}+2\rho e^{i\delta}(1-p)J_3+(1-p)J_{-}.
\end{align}
Expanding these expressions leads to a difference equation of the form
\begin{align}
nR_{k,n}=\sum_{j=-6}^{6}m_{k}^{(j)}R_{k+j,n},
\end{align}
which can be turned into a difference equation for the matrix polynomials $Q_{n}(k)$ which is quite involved and not provided here.

It is interesting to observe that this difference equation contains the same number of terms as the recurrence relation, but has a different symmetry. Indeed, the indices run from $-6$ to $6$ in the difference equation. It indicates that the order in which the coherent state operator and the squeezing operator are presented has an impact on the structure of the associated polynomials. If one defines $R'=S(\xi)D(\eta)$ instead of $R=D(\eta)S(\xi)$, one gets $6\times 6$ matrix multi-orthogonal polynomials satisfying a three-term matrix recurrence relation; these polynomials are not, however, orthogonal, because the condition $(\Gamma_{n}^{(-6)})^{*}=\Gamma_{n}^{(6)}$ is not fulfilled.
\section{Generating functions and ladder relations}
The ordinary $\mathfrak{su}(2)$ coherent states $\ket{\eta}$ can be used to obtain a generating function for the matrix elements $R_{k,n}$. We consider the two-variable function defined by
\begin{equation}
G(x,y)=\frac{1}{(1+\rho^2)^{N}}\sum_{k,n}\binom{N}{k}^{1/2}\binom{N}{n}^{1/2}\overline{x}^{k} y^{n}R_{k,n}.
\end{equation}
Clearly, the function $G(x,y)$ can be viewed as the matrix element of $R$ between the coherent states $\ket{x}$ and $\ket{y}$. Introducing a resolution of the identity, we obtain
\begin{equation}
\label{generating}
G(x,y)=\Braket{x}{R}{y}=\sum_{m=0}^{N}\Braket{x}{D(\eta)}{N,m}\Braket{m,N}{S(\xi)}{y}.
\end{equation}
The first part of this convolution can be evaluated directly using the action of the generators $J_{\pm}$ on the states; not surprisingly, we recover, up to a multiplicative factor, the generating function of the Krawtchouk polynomials; the result is
\begin{equation}
\label{premiere}
\Braket{x}{D(\eta)}{N,m}=\frac{\overline{\eta}^{m}}{(1+\rho^2)^{N}}\binom{N}{m}^{1/2}(1+\eta\overline{x})^{N-m}\left(\frac{(1-p)}{p}\eta\overline{x}-1\right)^{m},
\end{equation}
with $p=\frac{\rho^2}{1+\rho^2}$. The second part in the R.H.S of \eqref{generating} is also readily determined. Setting $m=2t+s$, one finds
\begin{equation}
\label{seconde}
\Braket{m,N}{S(\xi)}{y}=\frac{\xi^{t}y^{s}}{t!}\sqrt{\frac{N!m!}{(N-m)!}}\sum_{k=0}^{\lfloor\frac{N-s}{2}\rfloor}\frac{(-y^2/\xi)^{k}}{(2k+s)!}(-t)_{k}\;{}_2F_{0}\left(\frac{-z(k)}{2},\frac{1-z(k)}{2};-4y^2\overline{\xi}\right),
\end{equation}
where $z(k)=N-2k-s$. The convolution of the two functions given in \eqref{premiere} and \eqref{seconde} thus yields the generating function for the matrix elements $R_{k,n}$. The ladder relations for the matrix polynomials $Q_{n}(k)$ can also be constructed explicitly from the observations
\begin{align}
\sqrt{-(n+1)_{3}(n-N)_{3}}\,R_{k,n+3}&=\Braket{k,N}{RJ_{+}^{3}}{N,n},\nonumber\\
\sqrt{-(-n)_{3}(N-n+1)_{3}}\,R_{k,n-3}&=\Braket{k,N}{RJ_{-}^{3}}{N,n}.
\end{align}
The conjugation of the operator $J_{\pm}^{3}$ by the operator $R$ leads to complicated expressions which are best evaluated with the assistance of a computer.
\section{Observables in the squeezed-coherent states}
We now  further investigate the properties of the states $\ket{\eta,\xi}$ resulting from the application of the squeezed-coherent operator $D(\eta)S(\xi)$ on the vacuum $\ket{N,0}$. For systems which possess the $\mathfrak{su}(2)$ symmetry, there exist many different parameters to determine whether a state is squeezed or not (for a review of the parameters that can be used see \cite{Wang-2011}). In the following, we will adopt \cite{Wang-2001}
\begin{align}
\label{critere}
Z^2_{\vec{n}_{i}}=N\frac{(\Delta J_{\vec{n}_{i}})^2}{\langle J_{\vec{n}_{i+1}}\rangle^2+\langle J_{\vec{n}_{i+2}}\rangle^2},
\end{align}
where the indices are to be understood cyclically. If $Z^2_{\vec{n_i}}<1$, the system is squeezed in the direction $\vec{n}_{i}$, where the $\vec{n}_{i}$, $i=1,2,3$ are orthogonal unit vectors. This choice of the squeezing criterion is relevant because of its relation with entanglement \cite{Wang-2011}.

The mean value of any observable $\mathcal{O}$ in the state $\ket{\eta,\xi}$ can be expressed by
\begin{align}
\langle\mathcal{O}\rangle=\frac{1}{\kappa(r)}\sum_{i,j}\frac{(\xi/2)^{i}(\overline{\xi}/2)^j}{i!j!}\sqrt{(2i)!(2j)!(-N)_{2i}(-N)_{2j}}\;\Braket{2j}{D^{-1}(\eta)\mathcal{O}D(\eta)}{2i},
\end{align}
where $\kappa$ is the normalization constant, which is given by the hypergeometric function
\begin{align}
\kappa(r)=|\braket{\xi,\eta}{\eta,\xi}|^2={}_3F_{0}\left(1/2,\frac{-N}{2},\frac{1-N}{2}; (2r)^2\right).
\end{align}
For simplicity, we choose to study squeezing along the axis of the Hamiltonian $J_{3}$. We begin by setting $r=0$ to study the behavior of the parameter $Z_{3}^{2}$ in the coherent states. We readily find
\begin{align}
\langle J_1\rangle&=\langle Q\rangle=\frac{N\rho}{1+\rho^2}\cos(\delta) &&
\langle J_2\rangle=-\langle P\rangle=-\frac{N\rho}{1+\rho^2}\sin(\delta),\\
\langle J_3\rangle&=\langle (H-1/2)\rangle=-\frac{N}{2}\frac{1-\rho^2}{1+\rho^2},&&
(\Delta J_{3})^2=\langle J_{3}^2\rangle-\langle J_{3}\rangle^2=\frac{N\rho^2}{(1+\rho^2)^2}.
\end{align}
From these results, it is seen that for pure coherent states $\ket{\eta}$, we always have $Z^2_{3}=1$, which ensures, according to the definition \eqref{critere}, that purely coherent states are never squeezed; this can be proved in a straightforward manner for any choice of normalized basis $\{\vec{n}_1,\vec{n}_2,\vec{n}_3\}$\cite{Wang-2001}.

We now investigate squeezing in the $\vec{n}_{3}=(0,0,1)$ direction for the states $\ket{\eta,\xi}$. We have that
\begin{align}
\langle J_{1}\rangle&=\left[\frac{\rho\cos(\delta)}{1+\rho^2}\right]G_{n}(r),
&&\langle J_{2}\rangle=\left[\frac{-\rho\sin(\delta)}{1+\rho^2}\right]G_{n}(r),
&&&\langle J_{3}\rangle=\left[\frac{1-\rho^2}{1+\rho^2}\right]H_{n}(r)
\end{align}
\begin{align}
\langle J_{3}^2\rangle&=\frac{1}{(1+\rho^2)^2}\left(\left[1+\rho^4\right]J_{n}(r)+\rho^2L_{n}(r)+2\rho^2[\cos(2\delta-\gamma)]M_{n}(r)\right)
\end{align}
where we have defined the polynomials
\begin{align}
G_{n}(r)&=\frac{1}{\kappa(r)}\sum_{i=0}^{\lfloor \frac{N}{2}\rfloor}\frac{(r^2)^{i}}{i!}(1/2)_{i}(-N)_{2i}(N-4i),\\
H_{n}(r)&=\frac{1}{\kappa(r)}\sum_{i=0}^{\lfloor \frac{N}{2}\rfloor}\frac{(r^2)^{i}}{i!}(1/2)_{i}(-N)_{2i}(2i-N/2),
\end{align}
\begin{align}
J_{n}(r)&=\frac{1}{\kappa(r)}\sum_{i=0}^{\lfloor \frac{N}{2}\rfloor}\frac{(r^2)^i}{i!}(1/2)_{i}(-N)_{2i}(4i^2-2iN+N^2/4),\\
L_{n}(r)&=\frac{1}{\kappa(r)}\sum_{i=0}^{\lfloor \frac{N}{2}\rfloor}\frac{(r^2)^i}{i!}(1/2)_{i}(-N)_{2i}(-16i^2+8iN-N^2/2+N)\\
M_{n}(r)&=\frac{r}{\kappa(r)}\sum_{i=0}^{\lfloor \frac{N}{2}\rfloor}\frac{(r^2)^i}{i!}(1/2)_{i+1}(-N)_{2i+2}
\end{align}
The exact expression for the squeezing parameter $Z^2_{\vec{n}_{z}}$ cannot be obtained in closed form; nevertheless, it can easily be computed numerically. We observed that squeezing ($Z^2_{\vec{n}_{z}}<1$) along the Hamiltonian axis $\vec{n}_{z}$ occurs only in the case where $N$ is even, which corresponds to an oscillator having an odd number of points.  It is worth noting that a distinction between the $N$ odd and  the $N$ even cases also arises in the Fourier-Krawtchouk transform \cite{Wolf-2001}, which transforms the finite oscillator wave functions into themselves. The squeezing parameter is plotted against $\theta$ in figure 1.
\begin{figure}[!ht]
\begin{center}
{\includegraphics[width=0.4\linewidth]{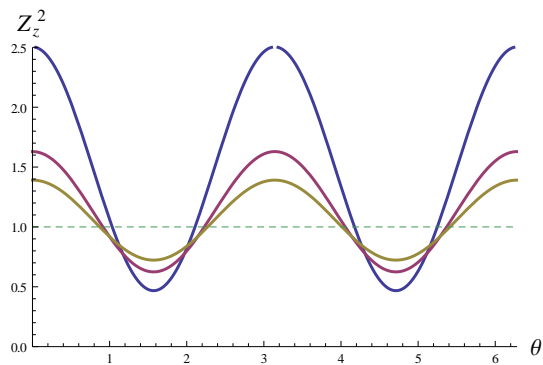}}
\end{center}
\caption{Squeezing parameter $Z^2_{\vec{n}_{z}}$ for $r=2,4,6$ (decreasing amplitudes) with $\rho=0.8$ and $N=40$.}
\end{figure}

\newpage
 \section{Contraction to the standard oscillator}
It is of interest to study the behavior of the polynomials $Q_{n}(k)$ and of the squeezed-states in the contraction limit $N\rightarrow \infty$ where the finite $\mathfrak{u}(2)$ oscillator tends to the standard quantum harmonic oscillator; this limit was studied in detail in \cite{Wolf-2003}. To obtain the proper limit, the parameters of the squeeze-coherent operator  must be renormalized according to
\begin{align}
\rho\rightarrow \frac{\rho}{\sqrt{N}},\;\;\;\; r\rightarrow \frac{r}{N}.
\end{align}
Upon taking the limit, one finds that the 13-term recurrence relation of the $R_{n,k}$ tends to a 5-term symmetric recurrence relation of the form
\begin{equation}
kR_{k,n}=\sum_{j=-2}^{2}c_{n}^{(j)}R_{k,n+j}.
\end{equation}
This is indeed the type of recurrence relation obtained in \cite{Vinet-2011}. In the case of the standard oscillator, it is possible to choose a unitary version of the corresponding $S$ operator and to express its matrix elements in terms of $2\times 2$ matrix orthogonal polynomials.
\section{Conclusion}
The matrix elements of the exponential operators corresponding to the squeezed-coherent operators of the finite oscillator have been determined in the energy eigenbasis of this model. They were seen to be given in terms of matrix multi-orthogonal polynomials which have the Krawtchouk and vector orthogonal polynomials as building blocks. The algebraic setting allowed to characterize these polynomials and to explicitly compute their matrix orthogonality functionals. The results have been used to show that the squeezed coherent states of the finite oscillator exhibit squeezing when the dimension of the oscillator $N+1$ is odd. 
\section*{Acknowledgments}
The authors are very grateful to K.B. Wolf for helpful comments. The authors also wish to thank the referees for drawing our attention on \cite{Spindel-2008} and for their suggestions. The research of L.V. is supported in part through a grant from the National Sciences and Engineering Council of Canada (NSERC). A.Z. wishes to thanks the Centre de recherches math\'ematiques (CRM) for its hospitality during the course of this investigation.
\appendix
\section*{Appendix A--Useful formulas involving the $\mathfrak{su}(2)$ generators}
The relation
\begin{equation*}
J_{3}J_{\pm}^n=J_{\pm}^{n}(J_3\pm n),
\end{equation*}
holds and  can be proven straightforwardly by induction on $n$. Using this identity and the relations $(J_{\pm})^{\dagger}=J_{\mp}$ as well as $J_3^{\dagger}=J_{3}$ , it follows that for $P(J_{\pm})$ denoting a polynomial in $J_{\pm}$, one has
$$
[P(J_{\pm}),J_3]=\mp J_{\pm}P'(J_{\pm}),
$$
where $P'(x)$ is the derivative of $P(x)$ with respect to $x$. The preceding formula and the Baker-Campbell-Hausdorff relation lead to the identity
$$
e^{P(J_{\pm})}J_3e^{-P(J_{\pm})}=J_3\mp J_{\pm}P'(J_{\pm}).
$$
In addition, we have the relations
\begin{align*}
[J_{+},J_{-}^{n}]&=2nJ_{3}J_{-}^{n-1}+n(n-1)J_{-}^{n-1},\\
[J_{-},J_{+}^{n}]&=-2nJ_{+}^{n-1}J_{3}-n(n-1)J_{+}^{n-1},
\end{align*}
which can also be proved by induction on $n$. With the help of the previous identities, one obtains
\begin{align*}
[J_{+},P(J_{-})]&=2J_{3}P'(J_{-})+J_{-}P''(J_{-}),\\
[J_{-},P(J_{+})]&=-2P'(J_{+})J_{3}-J_{+}P''(J_{+}),
\end{align*}
From these formulas it follows that
\begin{align*}
e^{P(J_{-})}J_+e^{-P(J_{-})}=J_+ -2J_{3}P'(J_{-})-J_{-}[P''(J_{-})+P'(J_{-})^2],\\
e^{P(J_{+})}J_-e^{-P(J_{+})}=J_- +2P'(J_{+})J_{3}+J_{+}[P''(J_{+})-P'(J_{+})^2].
\end{align*}
\section{Krawtchouk polynomials}
The Krawtchouk polynomials have the hypergeometric representation
$$
K_{n}(x;p,N)={}_2F_{1}\left[\begin{aligned}-n&,-x\\ -&N\end{aligned};\frac{1}{p}\right].
$$
Their orthogonality relation is
$$
\sum_{x=0}^{N}\binom{N}{x}p^{x}(1-p)^{N-x}K_m(x;p,N)K_{n}(x;p,N)=\frac{(-1)^{n}n!}{(-N)_n}\left(\frac{1-p}{p}\right)^{n}\delta_{nm}.
$$
They have the generating function
$$
(1+t)^{N-x}\left(1-\frac{1-p}{p}t\right)^{x}=\sum_{n=0}^{N}\binom{N}{n}K_{n}(x;p,N)t^{n}.
$$
For further details, see \cite{Koekoek-2010}.

\end{document}